\newcommand{\be}{\begin{equation}}
\newcommand{\ee}{\end{equation}}
\newcommand{\ba}{\begin{aligned}}
\newcommand{\ea}{\end{aligned}}
\newtheorem*{theorem}{Theorem}
\newtheorem*{lemma}{Lemma}
\newtheorem{condition}{Condition}
\begin{document}

\pagestyle{myheadings}

\title{Turning Point Instabilities for Relativistic Stars and Black Holes}
\author{Joshua S. Schiffrin}
\email{schiffrin@uchicago.edu}
\author{Robert M. Wald}
\email{rmwa@uchicago.edu}
\affiliation{Enrico Fermi Institute and Department of Physics \\
  The University of Chicago \\
  5640 S. Ellis Ave., Chicago, IL 60637, U.S.A.}
\date{\today}

\begin{abstract} 

In the light of recent results relating dynamic and thermodynamic stability of relativistic stars and black holes, we re-examine the relationship between ``turning points''---i.e., extrema of thermodynamic variables along a one-parameter family of solutions---and instabilities. We give a proof of Sorkin's general result---showing the existence of a thermodynamic instability on one side of a turning point---that does not rely on heuristic arguments involving infinite dimensional manifold structure. We use the turning point results to prove the existence of a dynamic instability of black rings in $5$ spacetime dimensions in the region where $c_J > 0$, in agreement with a result of Figueras, Murata, and Reall.

\end{abstract}

\maketitle

\section{Introduction} \label{intro}

As defined more precisely below (see section \ref{proof}), a ``turning point'' along a one-parameter family of thermodynamic equilibrium configurations is a point at which the first derivatives along the family of the fundamental thermodynamic variables vanish. Turning points have been widely used to find instabilities of stars in general relativity; see \cite{Friedmanbook,FIS} and references cited therein. For example, it has long been known that for a sequence of static spherically symmetric stars with equation of state of the form $P=P(\rho)$, an instability sets in at a point of maximum mass in the sequence. Similar criteria have been applied to determine the presence\footnote{\footnotesize It also was argued in \cite{FIS} that instabilities should not occur prior to turning points, but numerical studies \cite{Takami} have indicated that the onset of instability does occur prior to the turning point. \medskip} of instabilities of rotating stars \cite{FIS}. Turning point methods have also been used in the study of instabilities of black holes and higher dimensional black objects in vacuum general relativity \cite{Reall,AL}.

More than 30 years ago, Sorkin \cite{Sorkin} gave a very general argument that the existence of a turning point along a sequence of thermodynamic equilibrium states implies the presence of a thermodynamic instability on one side of the turning point. However, Sorkin's argument has some heuristic elements involving the assumption of an infinite-dimensional manifold structure on the space of solutions, and he also makes other general assumptions that would need to be verified in particular applications. One purpose of this note is to give a simple proof of Sorkin's result that does not require infinite dimensional manifold structure or other heuristic arguments. Another purpose is to determine cases for relativistic stars and black holes where the presence of a turning point implies a dynamic instability, in light of the general analyses of \cite{HollandsWald} and \cite{GSW} relating dynamic and thermodynamic instabilities. A final purpose is to relate the general turning point stability criterion to the criterion obtained in \cite{Reall} for the instability of black rings in 5 dimensions, namely, positivity of the heat capacity at fixed angular momenta, $c_J$.

We now describe the general framework in which we are working, and explain how relativistic perfect fluid stars and black holes fit within this framework. 

We consider systems that are described by a set of local fields, $\phi$, on spacetime satisfying field equations with a well-posed initial value formulation. We assume that we have notions of the total mass/energy, $M[\phi]$, and the total entropy, $S[\phi]$, of any solution $\phi$. We assume further that there are a finite number, $p$, of other ``fundamental conserved quantities,'' $\left(X^1[\phi], \dots, X^p[\phi]\right)$, such that a first law of thermodynamics holds for all perturbations of a solution in thermodynamic equilibrium. Here, a solution, $\phi$, that is in dynamic equilibrium is said to be in \emph{thermodynamic equilibrium} if and only if $\delta S=0$ for all first order perturbations, $\delta\phi$, that have $\delta M=\delta X^i=0$, whereas a \emph{first law of thermodynamics} is said to hold if there exist constants $\mathcal T$ and $Y_i$ (which depend upon the thermodynamic equilibrium solution $\phi$) such that for all perturbations, we have
\be \label{firstlaw}
\delta M = \mathcal T \delta S + Y_i \delta X^i \,,
\ee
where a sum over $i$ is understood. Note that if $\mathcal T \neq 0$, the satisfaction of the first law \eqref{firstlaw} for perturbations of a solution $\phi$ implies that $\phi$ is a thermodynamic equilibrium solution, since clearly $\delta S = 0$ whenever $\delta M=\delta X^i=0$.

Two important examples of such systems are relativistic, perfect fluid stars and black holes. As discussed in detail in \cite{GSW}, for relativistic, perfect fluid stars, 
$\phi$ consists of the metric $g_{ab}$, the fluid velocity $u^a$ (satisfying $u^a u_a = -1$), the particle number density $n$, and the entropy per particle $s$. The energy density of the fluid is then determined by a specified equation of state $\rho=\rho(n,s)$, and the pressure is given by the ``Gibbs-Duhem'' relation
\be
P = -\rho+\mu n + T s n \, ,
\ee
where
\be
T \equiv \frac{1}{n} \frac{\partial\rho}{\partial s} \, , \quad \quad \quad \mu \equiv \frac{\partial\rho}{\partial n} - Ts \, 
\ee
are the local temperature and chemical potential of the fluid.
The total mass, $M$, is taken to be the ADM mass of the spacetime, and the total entropy, $S$, is given by
\be
S = \int_\Sigma s n u^a \epsilon_{abcd} \, ,
\ee
where $\Sigma$ is a Cauchy surface and $\epsilon_{abcd}$ is the spacetime volume element associated with 
$g_{ab}$. The additional ``fundamental conserved quantities'' are the ADM angular momentum, $J$, and the total particle number, $N$, defined by 
\be
N = \int_\Sigma  n u^a \epsilon_{abcd} \, .
\ee
A solution is said to be in dynamic equilibrium if it is stationary with killing vector $t^a$, is axisymmetric with killing vector $\varphi^a$, and has circular flow, i.e., $u^a$ is of the form 
\be
u^a = \frac{t^a+\Omega \varphi^a}{|v|}
\ee
for some function $\Omega$. It was proven in \cite{GSW} that a star in dynamic equilibrium will be in thermodynamic equilibrium if and only if $\Omega$, $\widetilde T \equiv |v| T$, and $\widetilde\mu \equiv |v| \mu$ are constant throughout the star. In other words, a star in dynamic equilibrium will be in thermodynamic equilibrium if and only if it rotates rigidly and has uniform redshifted temperature and redshifted chemical potential. For stars in thermodynamic equilibrium, the first law of thermodynamics holds in the form
\be
\delta M = \widetilde T \delta S + \Omega \delta J + \widetilde \mu \delta N \, .
\ee

For black holes in vacuum general relativity, $\phi$ is simply the spacetime metric $g_{ab}$. A solution is said to be in dynamical equilibrium if it is stationary with asymptotically timelike killing vector $t^a$. If the black hole is not static, then by the rigidity theorem \cite{HIW, MI}, it must also be axisymmetric with rotational killing vector $\varphi^a$. The total mass, $M$, is again taken to be the ADM mass, whereas the total entropy, $S$, is taken to be $A/4$, where $A$ is the surface area of the event horizon. An additional\footnote{\footnotesize For vacuum general relativity in spacetime dimensions $D>4$, there will be independent rotational planes and, hence, multiple angular momenta, $J^i$; for general relativity with suitable long range matter fields, there will also be conserved charges $Q^i$. \medskip} ``fundamental conserved quantity'' is the ADM angular momentum $J$. Stationary black holes satisfy the first law of black hole mechanics
\be
\delta M = \frac{\kappa}{8\pi} \delta A + \Omega_H \delta J \, ,
\ee
where $\kappa$ is the surface gravity of the black hole and $\Omega_H$ is the angular velocity of the horizon. Thus, for black holes, $\mathcal T = \kappa/2 \pi$. It follows from the first law that {\it all} stationary black holes in general relativity with $\kappa > 0$ are in thermodynamic equilibrium.

In the next section, we define our notion of ``turning points,'' and give a proof of Sorkin's result that does not rely on infinite dimensional manifold structure. We then give several applications.

\section{Turning Points and Thermodynamic Instability}\label{proof}

We now return to the general framework outlined in the previous section. Let $\phi(\lambda)$ be a
smooth 1-parameter family of solutions in thermodynamic equilibrium, with mass, $M(\lambda)$, and 
other conserved quantities $X^i (\lambda)$. Then $\phi(0)$ is said to be a \emph{turning point} of the family if at $\lambda = 0$ we have $d \phi/d \lambda \neq 0$ but
\be
\frac{d M}{d \lambda} = \frac{d X^i}{d \lambda} = 0 \, .
\label{tp}
\ee
By the first law, eq.~\eqref{firstlaw}, we also have $d S /d \lambda = 0$ at a turning point; more generally, a turning point may be defined by the simultaneous vanishing of any $(p+1)$ of the $(p+2)$ quantities $d S /d \lambda , d M /d \lambda , d X^1/d \lambda, \dots, d X^p/ d \lambda$. 

An alternative notion of turning points used by \cite{Reall, Read} may be given as follows. Suppose we have a smooth $(p+1)$-parameter family, $\phi(\lambda^0,\dots,\lambda^p)$ of thermodynamic equilibrium solutions that is nondegenerate in the sense that the perturbations $\partial \phi/\partial \lambda^A$ for $A=0,\dots,p$ are linearly independent. We say that this family has a {\it turning point} at $(\lambda^0,\dots,\lambda^p)$ if at that point the determinant of the Jacobian matrix, $\mathcal J^A_{\phantom AB}$, vanishes,
\be
\left|\mathcal J^A_{\phantom AB}\right| = \left|\frac{\partial X^A}{\partial \lambda^B}\right| = 0 \, ,
\label{jac}
\ee
where we have written $X^0 = M$, and $A$ and $B$ range from $0$ to $p$. The existence of a turning point in this sense can be seen to imply the existence of a turning point in the previous sense as follows: If the Jacobian determinant vanishes then $\mathcal J^A_{\phantom AB}$ has an eigenvector, $V^A$, with eigenvalue zero,
\be
\mathcal J^A_{\phantom AB} V^B=0 \, .
\ee
It follows immediately that any curve passing through the point $(\lambda^0,\dots,\lambda^p)$ with tangent vector $V^A$ will satisfy $d\phi/d\lambda \neq 0$ and $dX^A/d\lambda = 0$ at that point. Conversely, if a $(p+1)$-parameter family of thermodynamic equilibrium solutions contains a curve with a turning point in the sense of eq.~\eqref{tp}, then that point will be a turning point of the family in the sense of eq.~\eqref{jac}.

As stated above, by definition the condition for thermodynamic equilibrium is that $S$ be an extremum at fixed $M$ and $X^i$. Similarly, by definition, the condition for thermodynamic stability is that $S$ must be a local maximum at fixed $M$ and $X^i$. In particular, a thermodynamic equilibrium solution $\phi$ will be thermodynamically unstable if there exists a field variation to second order, i.e., $\phi \to \phi + \alpha \delta \phi + \alpha^2/2 \, \delta^2 \phi$, that keeps $M$ and $X^i$ fixed to both first and second order and for which $\delta^2 S > 0$. An alternative criterion that depends only on the first order variation $\delta \phi$, may be given as follows. The quantity
\be \label{Bdef0}
\mathcal B \equiv \delta^2 M - \mathcal T \delta^2 S - Y_i \delta^2 X^i
\ee
does not depend on the second order perturbation $\delta^2 \phi$, since, for fixed $\delta \phi$, the difference between two second order perturbations $\delta^2 \phi$ and $\delta^2 \phi'$ satisfies the linearized perturbation equations and hence must satisfy \eqref{firstlaw}. Thus, $\mathcal B$ depends only on the background equilibrium solution $\phi$ and the first order perturbation $\delta \phi$. 
Assuming $\mathcal T > 0$, we see immediately that any field variation that keeps $M$ and $X^i$ fixed to both first and second order and for which $\delta^2 S > 0$ will satisfy $\mathcal B < 0$. Conversely, suppose that the following condition holds for the theory under consideration: 
\begin{condition} \label{condA}
Let $\phi$ be a thermodynamic equilibrium solution and let $(c^0, \dots, c^p)$ be constants. Then there exists a linearized perturbation, $\delta \phi$, of $\phi$ such that $\delta M = c^0$ and $\delta X^i = c^i$ for $i=1,\dots,p$.
\end{condition}

\noindent
Then it is easy to see that if
$\mathcal B < 0$ for some first order perturbation $\delta \phi$ for which $\delta M = \delta X^i = 0$, we can adjust the second order perturbation by addition of a solution to the linearized equations so as to satisfy $\delta^2 M = \delta^2 X^i = 0$; hence, for this adjusted field variation, we have $\delta^2 S > 0$ with $M$ and $X^i$ fixed to both first and second order. Thus, for any theory that satisfies Condition \ref{condA}, if $\phi$ is a thermodynamic equilibrium solution with $\mathcal T > 0$ and if one can find a first order perturbation, $\delta \phi$, of $\phi$ for which $\delta M = \delta X^i = 0$ and $\mathcal B < 0$, then $\phi$ is thermodynamically unstable.

Condition \ref{condA} merely requires that one be able to find {\it some} perturbation of a thermodynamic equilibrium solution that produces any desired values of $\delta M $ and $\delta X^i$. It can be seen to hold for relativistic stars by virtue of the lemma of Appendix A of \cite{GSW}. We believe that it holds for black holes, although we have not attempted to prove this.

It is useful to view $\mathcal B$ in the following manner as a symmetric quadratic form on perturbations off of a thermodynamic equilibrium solution $\phi$. Let $\phi(\lambda_1, \lambda_2)$ be a two-parameter family of solutions with $\phi(0,0) = \phi$. Then the quadratic form $\mathcal B$ is given by
\be \label{Bdef}
\mathcal B\left[\phi; \delta_1 \phi, \delta_2 \phi\right] \equiv \delta_1 \delta_2 M - \mathcal T \delta_1 \delta_2 S - Y_i \delta_1\delta_2 X^i \, ,
\ee
where $\delta_1 \phi \equiv \partial \phi/\partial \lambda_1 |_{\lambda_1=\lambda_2=0}$, $\delta_1\delta_2 M \equiv \partial^2 M/\partial \lambda_1 \partial \lambda_2 |_{\lambda_1=\lambda_2=0}$, etc. It is important to note that $\mathcal B$ depends linearly on $\delta_1 \phi$ and $\delta_2 \phi$ and, by the first law, does not depend on $\partial^2 \phi/\partial \lambda_1 \partial \lambda_2$ or any other $\lambda_1$ or $\lambda_2$ derivatives of $\phi$ higher than first. Our above criterion for the thermodynamic instability of $\phi$ is the existence of a perturbation $\delta \phi$ for which $\delta M = \delta X^i = 0$ and
\be
\mathcal B\left[\phi; \delta \phi, \delta \phi\right] < 0 \, .
\ee

We now prove the following lemma concerning $\mathcal B$:

\begin{lemma}
Let $\phi(\lambda_1, \lambda_2)$ be a smooth 2-parameter family of solutions such that $\phi(\lambda_1, 0)$ is a thermodynamic equilibrium solution for all $\lambda_1$. Then, at $\lambda_2=0$, we have for all $\lambda_1$ 
\be
\mathcal B\left[\phi(\lambda_1,0); \delta_1 \phi, \delta_2 \phi \right] = \frac{d\mathcal T}{d\lambda_1} \delta_2 S + \frac{d Y_i}{d\lambda_1} \delta_2 X^i \, .
\label{Blemma}
\ee
\end{lemma}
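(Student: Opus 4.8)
The plan is to invoke the first law, eq.~\eqref{firstlaw}, along the equilibrium curve $\phi(\lambda_1,0)$ and then differentiate the resulting identity with respect to $\lambda_1$. The essential point is that the hypothesis supplies equilibrium \emph{for every} $\lambda_1$, so the first law holds all along that curve, with coefficients $\mathcal T$ and $Y_i$ that are themselves functions of $\lambda_1$; it is the variation of these coefficients that will generate the right-hand side of \eqref{Blemma}.

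First I would fix $\lambda_1$ and view $\phi(\lambda_1,\lambda_2)$ as a one-parameter family of solutions emanating from the equilibrium solution $\phi(\lambda_1,0)$. Its tangent $\delta_2\phi \equiv \partial\phi/\partial\lambda_2|_{\lambda_2=0}$ is then a genuine linearized perturbation of an equilibrium solution, so the first law applies and gives, at $\lambda_2=0$ and for every $\lambda_1$,
\be
\frac{\partial M}{\partial \lambda_2} = \mathcal T(\lambda_1)\,\frac{\partial S}{\partial \lambda_2} + Y_i(\lambda_1)\,\frac{\partial X^i}{\partial \lambda_2} \, ,
\ee
where I have written the coefficients explicitly as functions of the background point.

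Next I would differentiate this identity in $\lambda_1$. By the Leibniz rule the derivative splits into terms in which $d/d\lambda_1$ hits the coefficients, producing $(d\mathcal T/d\lambda_1)\,\delta_2 S + (dY_i/d\lambda_1)\,\delta_2 X^i$, and terms in which it hits the single $\lambda_2$-derivatives, producing $\mathcal T\,\delta_1\delta_2 S + Y_i\,\delta_1\delta_2 X^i$ once the mixed partials are identified with $\delta_1\delta_2 S$ and $\delta_1\delta_2 X^i$ (and the left-hand side with $\delta_1\delta_2 M$). Moving the latter two terms across, the left-hand side collects into exactly $\delta_1\delta_2 M - \mathcal T\,\delta_1\delta_2 S - Y_i\,\delta_1\delta_2 X^i = \mathcal B[\phi(\lambda_1,0);\delta_1\phi,\delta_2\phi]$, while the former two terms are the claimed right-hand side.

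There is no deep analytic obstacle: the argument is a single differentiation of the first law. The points that require care are (i) justifying that the first law may be applied at each $\lambda_1$, which rests entirely on the equilibrium hypothesis holding along the whole $\lambda_1$-curve rather than at one point, and (ii) keeping track of the $\lambda_1$-dependence of $\mathcal T$ and $Y_i$, since it is precisely the terms from differentiating these coefficients that survive on the right. Smoothness of the family guarantees that the mixed partials exist and commute, so identifying them with the entries of the bilinear form $\mathcal B$ is unproblematic; and the fact that the result is automatically built from the combination $\delta_1\delta_2 M - \mathcal T\,\delta_1\delta_2 S - Y_i\,\delta_1\delta_2 X^i$ is consistent with the earlier observation that $\mathcal B$ is independent of the mixed second derivative $\partial^2\phi/\partial\lambda_1\partial\lambda_2$.
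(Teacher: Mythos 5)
Your proposal is correct and is essentially identical to the paper's own proof: both apply the first law along the equilibrium curve $\phi(\lambda_1,0)$ and then differentiate in $\lambda_1$, with the derivatives of $\mathcal T$ and $Y_i$ producing the right-hand side of \eqref{Blemma}. You have simply spelled out the Leibniz-rule bookkeeping that the paper leaves implicit.
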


\begin{proof}
By the first law of thermodynamics \eqref{firstlaw}, at $\lambda_2 = 0$ we have for all $\lambda_1$
\be
0 = \delta_2 M - \mathcal T(\lambda_1) \delta_2 S -  Y_i(\lambda_1) \delta_2 X^i \, .
\ee
Differentiating this equation with respect to $\lambda_1$ and using the definition, \eqref{Bdef}, of  $\mathcal B$ we immediately obtain \eqref{Blemma}.
\end{proof}

We now prove the turning point theorem \cite{Sorkin}:

\begin{theorem}
For a theory that satisfies Condition \ref{condA}, let $\phi(\lambda)$ be a smooth 1-parameter family of thermodynamic equilibrium solutions that has a turning point at $\lambda=0$ and is such that $\mathcal T > 0$ at $\phi(0)$. Suppose further that we have
\be
 \sigma \equiv 
\left.\frac{d}{d\lambda}\left(\frac{d{\mathcal T}}{d \lambda} \frac{d S}{d \lambda}+
 \frac{d{Y_i}}{d \lambda} \frac{d X^i}{d \lambda}\right) \right|_{\lambda=0} > 0 \, .
 \label{instab}
 \ee
 Then there exists an $\epsilon>0$ such that $\phi(\lambda)$ is thermodynamically unstable for all 
 $\lambda \in (0,\epsilon)$. Similarly, if $\sigma < 0$, then there exists an $\epsilon>0$ such that $\phi(\lambda)$ is thermodynamically unstable for all
  $\lambda \in (-\epsilon, 0)$.
\end{theorem}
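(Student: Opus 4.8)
The plan is to produce, for each sufficiently small $\lambda>0$ in the case $\sigma>0$ (and each small $\lambda<0$ in the case $\sigma<0$), an explicit first-order perturbation $\delta\phi$ of $\phi(\lambda)$ that obeys $\delta M=\delta X^i=0$ and $\mathcal B[\phi(\lambda);\delta\phi,\delta\phi]<0$. By the thermodynamic-instability criterion established above (which is valid under Condition \ref{condA} and requires $\mathcal T>0$, both of which hold here for small enough $\lambda$ by continuity), such a perturbation exhibits the thermodynamic instability of $\phi(\lambda)$. The obvious candidate is the family tangent $d\phi/d\lambda$ itself; the difficulty is that away from the turning point $d\phi/d\lambda$ violates the constraints $\delta M=\delta X^i=0$, so it must be corrected, and the entire content of the argument is that this correction affects $\mathcal B$ only at second order in $\lambda$.

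First I would establish an \emph{orthogonality relation}: for every linearized perturbation $\delta\phi$ of $\phi(\lambda)$ with $\delta M=\delta X^i=0$, one has $\mathcal B[\phi(\lambda);\,d\phi/d\lambda,\,\delta\phi]=0$. This follows by applying the Lemma with the equilibrium tangent $d\phi/d\lambda$ in the first slot (the family $\phi(\lambda)$ is in equilibrium for all $\lambda$, so the Lemma's hypothesis holds) and $\delta\phi$ in the second slot: the right-hand side of \eqref{Blemma} is $(d\mathcal T/d\lambda)\,\delta S+(dY_i/d\lambda)\,\delta X^i$, and since $\delta X^i=0$ while $\mathcal T>0$ forces $\delta S=0$ through the first law \eqref{firstlaw} (namely $0=\delta M=\mathcal T\delta S+Y_i\delta X^i$), the whole expression vanishes.

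Next, using Condition \ref{condA}, I would choose a correction $C(\lambda)$ that produces $\delta M = dM/d\lambda$ and $\delta X^i = dX^i/d\lambda$, for instance $C=(dM/d\lambda)\,\chi_0+(dX^i/d\lambda)\,\chi_i$ with fixed ``dual'' perturbations $\chi_A$ supplied by Condition \ref{condA}, so that $\delta\phi\equiv d\phi/d\lambda-C$ satisfies $\delta M=\delta X^i=0$. Since $dM/d\lambda$ and $dX^i/d\lambda$ vanish at the turning point and are smooth, $C=O(\lambda)$. Expanding $\mathcal B[\delta\phi,\delta\phi]$ by bilinearity and using the orthogonality relation both to kill the cross term $\mathcal B[d\phi/d\lambda,\delta\phi]=0$ and, through $\delta\phi=d\phi/d\lambda-C$, to deduce $\mathcal B[d\phi/d\lambda,C]=\mathcal B[d\phi/d\lambda,d\phi/d\lambda]$, collapses everything to the clean identity
\be
\mathcal B[\phi(\lambda);\delta\phi,\delta\phi] = -\,\mathcal B[\phi(\lambda);\,d\phi/d\lambda,\,d\phi/d\lambda] + \mathcal B[\phi(\lambda);\,C,\,C] \, .
\ee
The crucial point is the sign flip: the correctly constrained perturbation carries $-\mathcal B[d\phi/d\lambda,d\phi/d\lambda]$, not $+$, which is exactly why the naive tangent fails but its projection into the constraint surface succeeds.

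Finally I would evaluate the two terms near $\lambda=0$. The Lemma with both slots equal to $d\phi/d\lambda$ gives $\mathcal B[\phi(\lambda);\,d\phi/d\lambda,\,d\phi/d\lambda]=(d\mathcal T/d\lambda)(dS/d\lambda)+(dY_i/d\lambda)(dX^i/d\lambda)\equiv f(\lambda)$, which vanishes at the turning point (there $dS/d\lambda=dX^i/d\lambda=0$) and, by the definition \eqref{instab} of $\sigma$, satisfies $f'(0)=\sigma$, so $f(\lambda)=\sigma\lambda+O(\lambda^2)$; meanwhile $\mathcal B[C,C]=O(\lambda^2)$ since $C=O(\lambda)$ and $\mathcal B[\phi(\lambda);\cdot,\cdot]$ is a smooth quadratic form. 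Hence $\mathcal B[\phi(\lambda);\delta\phi,\delta\phi]=-\sigma\lambda+O(\lambda^2)$, and uniform smoothness on a small interval yields a single $\epsilon>0$ for which this is negative for all $\lambda\in(0,\epsilon)$ when $\sigma>0$, and for all $\lambda\in(-\epsilon,0)$ when $\sigma<0$. The main obstacle is precisely to secure the sign and to confirm that the constraint-restoring correction is genuinely higher order: the orthogonality relation is what simultaneously flips $+f$ to $-f$ and guarantees that the $O(\lambda)$ correction contributes only at $O(\lambda^2)$, so it cannot contaminate the leading $-\sigma\lambda$ behavior. Controlling $C=O(\lambda)$ uniformly in $\lambda$ is a mild technical matter handled by the smoothness of the family together with Condition \ref{condA}.
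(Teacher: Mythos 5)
Your proposal is correct and takes essentially the same route as the paper's proof: you subtract an $O(\lambda)$ correction supplied by Condition \ref{condA} from the tangent $d\phi/d\lambda$, use the Lemma to evaluate the resulting quadratic form, and arrive at $\mathcal B = -\sigma\lambda + O(\lambda^2)$. The only cosmetic differences are that you package the cross-term computation as an ``orthogonality relation'' (the paper evaluates $\mathcal B\left[\phi(\lambda); d\phi/d\lambda, \widetilde\delta\phi\right]$ directly from the Lemma, with your $C$ equal to its $\lambda\,\widetilde\delta\phi$) and that you normalize the correction by prescribing $(\delta M, \delta X^i)$ rather than $(\delta S, \delta X^i)$.
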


\begin{proof}

Since $\lambda=0$ is a turning point, the quantities $dS/d\lambda$ and $dX^i /d \lambda$ vanish at $\lambda = 0$. Hence, $(dS/d\lambda)/\lambda$ and $(dX^i /d \lambda)/\lambda$ are smooth functions of $\lambda$. By Condition \ref{condA}, we can choose a 1-parameter family of perturbations, $\widetilde\delta\phi(\lambda)$, (assumed to be smooth in $\lambda$) such that in a neighborhood of $\lambda=0$, $\widetilde\delta\phi(\lambda)$ is a linearized solution off of $\phi(\lambda)$ such that
\be \label{tildedef}
\ba
\widetilde \delta S(\lambda) &= \frac{1}{\lambda}\frac{dS}{d\lambda}\\
\widetilde \delta X^i (\lambda) &= \frac{1}{\lambda}\frac{dX^i}{d\lambda} \, ,
\ea
\ee
where we have used the first law together with $\mathcal T (\lambda=0) \neq 0$ to effectively interchange $M$ and $S$ in the formulation of Condition \ref{condA}. Let
\be\label{hatdef}
\widehat\delta\phi(\lambda) = \frac{d\phi}{d\lambda} - \lambda \widetilde\delta\phi(\lambda) \, .
\ee
Then, by construction, the perturbation $\widehat\delta\phi(\lambda)$ has $\widehat\delta S (\lambda) = \widehat\delta X^i (\lambda) = 0$ (and, hence, by the first law, $\widehat \delta M(\lambda)=0$). Using the symmetry and bilinearity of $\mathcal B$, we have
\be
\ba
\mathcal B&\left[\phi(\lambda); \widehat\delta\phi(\lambda), \widehat\delta\phi(\lambda)\right] \\
&= \mathcal B\left[\phi(\lambda); \frac{d\phi}{d\lambda}, \frac{d\phi}{d\lambda}\right] - 2 \lambda \mathcal B\left[\phi(\lambda); \frac{d\phi}{d\lambda}, \widetilde\delta\phi(\lambda)\right] + \lambda^2 \mathcal B\left[\phi(\lambda);\widetilde\delta\phi(\lambda), \widetilde\delta\phi(\lambda)\right] \, .
\ea
\ee
Using the lemma on the first two terms, we obtain
\be
\ba
\mathcal B&\left[\phi(\lambda); \widehat\delta\phi(\lambda), \widehat\delta\phi(\lambda)\right] \\
&= \left(\frac{d\mathcal T}{d\lambda} \frac{dS}{d\lambda} + \frac{d Y_i}{d\lambda} \frac{dX^i}{d\lambda}\right) - 2 \lambda \left( \frac{d\mathcal T}{d\lambda} \widetilde\delta S + \frac{d Y_i}{d\lambda} \widetilde\delta X^i \right) + \lambda^2 \mathcal B\left[\phi(\lambda);\widetilde\delta\phi(\lambda), \widetilde\delta\phi(\lambda)\right] \\
&=-\left(\frac{d\mathcal T}{d\lambda} \frac{dS}{d\lambda} + \frac{d Y_i}{d\lambda} \frac{dX^i}{d\lambda}\right)+\lambda^2 \mathcal B\left[\phi(\lambda);\widetilde\delta\phi(\lambda), \widetilde\delta\phi(\lambda)\right] \, .
\ea
\ee
Taylor expanding $dS/d\lambda$ and $dX^i /d \lambda$ about $\lambda = 0$, we obtain
\be\label{Bwithhats}
\mathcal B\left[\phi(\lambda); \widehat\delta\phi(\lambda), \widehat\delta\phi(\lambda)\right] = 
- \lambda \sigma + O(\lambda^2) \, ,
\ee 
from which it can be immediately seen that if $\sigma > 0$, then $\mathcal B < 0$ for $\lambda \in (0,\epsilon)$ for some $\epsilon > 0$.
\end{proof}

It should be noted that this theorem can be generalized along the lines of Theorem 2 in \cite{Sorkin2}. Namely, if $\phi(\lambda, \lambda')$ is a 2-parameter family of thermodynamic equilibrium solutions such that $(\lambda,\lambda')=(0,0)$ is a turning point of the curve $\lambda'=0$, and if 
\be
\left.\frac{\partial}{\partial \lambda'} \left( \frac{\partial{\mathcal T}}{\partial \lambda} \frac{\partial S}{\partial \lambda}+
 \frac{\partial{Y_i}}{\partial \lambda} \frac{\partial X^i}{\partial \lambda} \right)\right|_{\lambda=\lambda'=0}>0,
\ee
then $\phi(0,\lambda')$ is thermodynamically unstable for all $\lambda' \in (0,\epsilon)$. The proof of this generalization is essentially the same as the above proof. The main difference is that one now defines
\be
\widehat\delta \phi(\lambda, \lambda') = \frac{\partial \phi}{\partial \lambda} - \lambda \widetilde \delta \phi (\lambda,\lambda') - \lambda' \widetilde \delta' \phi (\lambda,\lambda'), 
\ee
where  $\widetilde \delta \phi$ and $\widetilde \delta' \phi$ are chosen to make $\widehat\delta S (\lambda,\lambda') = \widehat\delta X^i (\lambda,\lambda') = 0$.

\section{Turning Points and Dynamic Instability}\label{rings}

As is clear from the previous section, the existence of a turning point yields a sufficient condition for the existence of a thermodynamic instability along a family of thermodynamic equilibrium solutions. However, turning points do not provide a necessary condition for thermodynamic instability, i.e., the onset of a thermodynamic instability could occur prior to a turning point or without the presence of any turning point at all. The situation is considerably worse for using turning points to determine the presence of dynamic instabilities\footnote{\footnotesize See \cite{GSW} for a complete discussion of the relationship between dynamic and thermodynamic instabilities. \medskip}: Again, the absence of a turning point does not imply the absence of a dynamic instability. Indeed, numerical investigations by Takami, Rezolla, and Yoshida \cite{Takami} found that dynamic instability---and therefore also thermodynamic instability---sets in \emph{prior} to the turning point along families of isentropic, rigidly rotating, perfect fluid stars. Furthermore, since thermodynamic instability does not, in general, imply dynamic instability, the presence of a turning point does not, in general, imply the existence of a dynamic instability. 

However, there are some situations of interest where thermodynamic instability implies dynamic instability, and in those situations the presence of a turning point can be used to infer the existence of a dynamic instability. The first is the case of spherically symmetric, relativistic stars with an ``isentropic'' equation of state\footnote{\footnotesize This case encompasses a general ``barotropic'' equation of state $P = P(\rho)$. \medskip}, $\rho = \rho(n)$. It was shown in \cite{GSW} that, in this case, if $\phi$ is a static, spherically symmetric solution and $\mathcal B < 0$ for a spherically symmetric perturbation for which $\delta N = 0$, then a {\it dynamic} instability must be present (see also \cite{Roupas}). Consequently, the presence of a turning point along a sequence of static, spherically symmetric solutions with isentropic equation of state implies the existence of a dynamic instability.

A second case is that of black holes in vacuum general relativity in arbitrary spacetime dimension $D \geq 4$. It was proven in \cite{HollandsWald} that, in this case, positivity of $\mathcal B$ for axisymmetric perturbations with $\delta M = \delta J^i = 0$ (where $J^i$ denote the angular momenta of the independent planes of rotation) is the criterion for dynamic as well as thermodynamic stability. Thus---assuming that stationary black hole solutions satisfy Condition \ref{condA}---the presence of a turning point implies the presence of a dynamic instability. Figueras, Murata, and Reall \cite{Reall}
have analyzed turning points for black rings in $5$ spacetime dimensions. In the remainder of this section, we relate their results to ours\footnote{\footnotesize It is possible to similarly relate the stability results of \cite{Reall2} for non-uniform black string solutions in 12 and 13 spacetime dimensions to our turning point results. \medskip}.

Figueras et al.~\cite{Reall} numerically analyze a 3-parameter family of ``doubly spinning'' black rings in 5 dimensions. The thermodynamic state parameters are the mass, $M$ and the independent angular momenta, $J^1$ and $J^2$. They find that there is a surface of turning points in the sense of \eqref{jac} separating a region of negative $c_J$ from a region of positive $c_J$, where  
\be
c_J \equiv \left( \frac{\partial M}{\partial \mathcal T} \right)_J =\mathcal T \left( \frac{\partial S}{\partial \mathcal T} \right)_J
= \frac{\kappa}{4} \left( \frac{\partial A}{\partial \kappa} \right)_J
\ee
is the heat capacity at constant $J$. They find numerically that instability occurs on the side where $c_J > 0$. This can be understood as follows.

We have verified that the quantities $(\mathcal T, J^1, J^2)$ smoothly and non-degenerately parameterize the family of black rings\footnote{\footnotesize The parametrization $(k, \alpha, \nu)$ used by \cite{Reall} can be verified to be non-degenerate. Using the formulas for $(\mathcal T, J^1, J^2)$ given in \cite{Elvang}, we have confirmed that the Jacobian determinant $\left|\partial(\mathcal T, J^1, J^2)/\partial(k,\alpha,\nu)\right|$ is nowhere vanishing.\medskip}.
Thus turning points in the sense of \eqref{jac} occur precisely at the points at which
\be
0 = \left|\frac{\partial(\mathcal M, J^1, J^2)} {\partial(\mathcal T, J^1, J^2)}\right| = c_J \, .
\ee
Furthermore, it is easily seen that any point at which $c_J = 0$ is a turning point of a curve of fixed angular momenta. 
Since for this curve, we have
\be
\sigma = \frac{d}{d\lambda}\left(\frac{d{\mathcal T}}{d \lambda} \frac{d S}{d \lambda}\right) \bigg|_{\lambda=0} \, ,
\ee
by the turning point theorem, a thermodynamic instability---and, hence, a dynamic instability \cite{HollandsWald}---will be present on the side of the turning point where
\be
\frac{d\mathcal T}{d\lambda} \frac{d S}{d\lambda} > 0  \, .
\ee
But we have
\be
\frac{d S}{d\lambda} =  \left(\frac{\partial S}{\partial \mathcal T}\right)_J \frac{d \mathcal T}{d\lambda} =  \frac{c_J}{\mathcal T} \frac{d \mathcal T}{d\lambda} \, ,
\ee
from which it follows that instability will occur on the side where $c_J$ is positive, in agreement with \cite{Reall}. Of course, while \cite{Reall} found numerically that all of the solutions with $c_J>0$ are unstable, here we have only shown instability in a small neighborhood of $c_J=0$.

\begin{acknowledgments}
This research was supported in part by NSF grant PHY~12-02718 to the University of Chicago.
\end{acknowledgments}
 \newpage
\bibliography{mybib}

\end{document}